\documentclass{ifacconf}

\usepackage{graphicx}      
\usepackage{natbib}        
\usepackage{amsmath,amssymb,amsfonts}
\usepackage{xcolor}
\usepackage{bm}
\usepackage{enumitem}
\usepackage{ifthen}

\newboolean{MTNS}
\setboolean{MTNS}{false}

\newtheorem{them}{\it Theorem}
\newtheorem{assume}{\it Assumption}
\newtheorem{problem}{\it Problem}
\newtheorem{remark}{\it Remark}
\newtheorem{define}{\it Definition}

\newtheorem{propt}{\it Proposition}

\begin{document}
\begin{frontmatter}

\title{On the Solvability of Quasi-Regulator Equations in Non-smooth Output Regulation} 


\author[First]{Zirui Niu} 
\author[Second]{Daniele Astolfi} 
\author[First]{Giordano Scarciotti} 

\address[First]{EEE, Imperial College London, Exhibition Rd, South Kensington, London, SW7 2AZ, United Kingdom (e-mail: [zn120,gs3610]@ic.ac.uk)}
\address[Second]{Universit\'e Claude Bernard Lyon 1, CNRS, LAGEPP UMR5007, 43 boulevard du 11 novembre 1918, F-69100, Villeurbanne, France (e-mail: 
daniele.astolfi@univ-lyon1.fr)}

\begin{abstract}                
Motivated by the prevalence of non-smooth, possibly non-periodic signals in real-world applications, the output regulation of linear systems subject to non-smooth non-periodic exogenous signals has emerged as a challenging problem. A fundamental prerequisite for solving this problem is the existence of solutions to the so-called ``quasi-regulator equations''. In this paper, we investigate the solvability of these equations. To this end, we reformulate the quasi-regulator equations as differential-algebraic equations and highlight the critical role played by the system's relative degree. We finally propose a ``non-smooth non-resonance condition'' that, under specific relative degree requirements, provides a necessary and sufficient characterization of the solvability of the quasi-regulator equations.
\end{abstract}

\begin{keyword}
Output regulation, regulator equations, solvability conditions, non-smooth dynamics, linear systems
\end{keyword}

\end{frontmatter}

\section{Introduction}\label{sec:intro}
Output regulation, a classical problem in control theory, entails designing a regulator to ensure that the output of a dynamic system asymptotically tracks the reference trajectories while rejecting disturbances. In this problem, both the references and disturbances, called the exogenous signals, are assumed to be generated by an autonomous dynamic system, commonly referred to as the exosystem. Solving an output regulation problem typically hinges on the solutions of the so-called regulator equations, and therefore, characterizing the conditions that determine their solvability is a significant topic. For example, in the classical linear time-invariant (LTI) case established by~\cite{ref:davison1976robust, ref:francis1976internal, ref:francis1977linear, ref:wonham1974linear}, such solvability of the regulator equations boils down to the well-known non-resonance condition, which requires the transmission zeros of the controlled linear system to be disjoint from the eigenvalues of the exosystem. The regulator equations can be extended to cases with more general classes of exogenous signals, see, \textit{e.g.},~\cite{ref:marconi2025robust}, and also play an important role in the output regulation of nonlinear systems, see, \textit{e.g.},~\cite{ref:byrnes2003limit,ref:petit2018necessary}.

While the majority of existing output regulation literature assumes differentiability of exogenous signals or considers piecewise absolutely continuous signals with periodic jumps (see, \textit{e.g.},~\cite{ref:huang2004nonlinear,ref:serrani2001semi,ref:chen2005robust,ref:marconi2007output,ref:zhang2009adaptive,ref:marconi2013internal, ref:carnevale2015hybrid, ref:carnevale2017robust}), non-smooth signals, possibly with non-periodic dynamics, such as sawtooth or pulse width modulation (PWM) signals, are commonly encountered in real-life applications, see, \textit{e.g.},~\cite{ref:sira1996dynamical,ref:cortes2008discontinuous, ref:chen2013tracking, ref:akhmet2010principles}. Motivated by this fact, recent studies presented by~\cite{ref:niu2024full,ref:niu2024adaptive,ref:Niu2025arXiv} focused on the output regulation problem for LTI systems subject to exogenous signals with \textit{non-smooth} and \textit{non-periodic} dynamics. For brevity, such a problem is referred to herein as the non-smooth non-periodic output regulation (NNOR) problem. While the NNOR problem has been addressed in different scenarios, such as the full-information or error-feedback case, all proposed solutions require solving a set of equations analogous to, yet distinct from, the standard regulator equations, herein called the \textit{quasi-regulator equations}\footnote{Due to the generality of the NNOR problem setting, regulator equations that characterize the steady-state error-zeroing dynamics may not be well-defined (as the steady state may be ill-defined), nor are they necessary for solving this problem. Instead,~\cite{ref:Niu2025arXiv} demonstrates that the solutions of the quasi-regulator equations are sufficient for solving the problem.}. Determining the solvability conditions for such equations is crucial but remains an open problem. 

This paper investigates the solvability conditions for the quasi-regulator equations arising in the NNOR problem, which involve integral terms that present analytical challenges (to be clarified later). To address this challenge, we first reformulate the quasi-regulator equations as a group of differential-algebraic equations (DAEs). Subsequently, we examine the solvability of these DAEs by analysing two different configurations separately: systems without a feedforward path and systems with a feedforward path. Note that the latter case proves more intricate, requiring a detailed analysis of the system's relative degree and the smoothness of the exogenous signal. Finally, we establish a ``non-smooth non-resonance condition'' and demonstrate that, under certain relative degree requirements, the non-smooth non-resonance condition is a necessary and sufficient condition for the quasi-regulator equations to be solvable. It is worth noting that this solvability problem was preliminarily introduced in~\cite{ref:niu2024full}. However, that study presented only a sufficient condition for a simplified case and lacks rigorous technical proofs. Furthermore, the results therein are restricted to a limited class of explicit generators that satisfy semigroup properties.

The rest of the paper is organized as follows. Section~\ref{sec:Prelim} recalls the formulation of the NNOR problem and its corresponding quasi-regulator equations. Section~\ref{sec:solvability} studies the solvability of the quasi-regulator equations by transforming them into DAEs and considering the feedforward case and non-feedforward case separately. Section~\ref{sec:concl} concludes the paper.

\textbf{Notation.} $\mathbb{R}_{\geq 0}$ denotes the set of non-negative real numbers, $\mathbb{R}_{>0}$ indicates $\mathbb{R}_{\geq 0} \backslash\{0\}$, 
$\mathbb{C}_{<0}$ stands for the set of complex numbers with a strictly negative real part and $\mathbb{C}_{\geq 0}$ denotes $\mathbb{C} \backslash \mathbb{C}_{<0}$. The symbol $\otimes$ indicates the Kronecker product, $I_{n}$ denotes an $n \times n$ identity matrix, $\bm{0}_{m \times n}$ represents an $m \times n$ zero matrix, $\sigma(A)$ represents the spectrum of the matrix $A \in \mathbb{R}^{n \times n}$ and $\|A\|$ indicates its induced Euclidean matrix norm. The superscript $\top$ stands for the transposition operator. For $n$ matrices $M_{i} \in \mathbb{R}^{p \times q}$, $i = 1, 2, \cdots, n$, $\operatorname{col}\left(M_1, M_2, \cdots, M_n\right)$ indicates the $pn \times q$ matrix obtained by vertically stacking the matrices $M_i$. Given a function $H(\cdot)$ and a positive integer $k$, the notation $H \in \mathcal{C}^{k}$ indicates that $H$ is $k$-times differentiable with its $k$-th derivative, denoted by $H^{(k)}$, being continuous in its domain. In addition, given an initial value $t_0$, the functional $\mathcal{I}^{[k]}_{t_0} :  H \mapsto \mathcal{I}^{[k]}_{t_0}[H]$ denotes $k$-times repeated integral of the function $H$, namely
$$
\mathcal{I}^{[k]}_{t_0}[H(t)] =\int_{t_0}^t \int_{t_0}^{\sigma_1} \cdots \int_{t_0}^{\sigma_{k-1}} H\left(\sigma_k\right) \mathrm{d} \sigma_{k} \cdots \mathrm{d} \sigma_2 \mathrm{~d} \sigma_1.
$$

\section{Preliminaries and Quasi-Regulator Equations}\label{sec:Prelim}
In this section, we introduce the formulations of an NNOR problem along with its quasi-regulator equations. To this end, consider a single-input single-output (SISO), LTI system of the form
\begin{equation}\label{equ:system}
    \begin{aligned}
    \dot{x}(t) &= Ax(t) + Bu(t) + P\omega(t),\\
    e(t) &= Cx(t) + Du(t) + Q\omega(t),
    \end{aligned}
\end{equation}
with $A \in \mathbb{R}^{n \times n}$, $B \in \mathbb{R}^{n \times 1}$, $P \in \mathbb{R}^{n \times \nu}$, $C \in \mathbb{R}^{1 \times n}$, $D \in \mathbb{R}$, $Q \in \mathbb{R}^{1 \times \nu}$. $x(t) \in \mathbb{R}^n$ the state, $u(t) \in \mathbb{R}$ the control input, $e(t) \in \mathbb{R}$ the regulation error, and $\omega(t) = [r(t),\; d(t)^{\top}]^{\top} \in \mathbb{R}^{v}$ the exogenous signal representing the disturbances $d(t) \in \mathbb{R}^{\nu-1}$ and/or reference signal $r(t) \in \mathbb{R}$. 

As mentioned in the introduction, a recent development in the output regulation field focuses on a class of general exogenous signals that have \textit{non-smooth} and \textit{non-periodic} dynamics. This class of signals is assumed to be generated by an explicit generator of the form \begin{equation}\label{equ:explicitGen}
\omega(t) = \Lambda\left(t, t_0\right) \omega_0.
\end{equation}
with $\Lambda\left(t, t_0\right) \in \mathbb{R}^{\nu \times \nu}$. This generator, as discussed in~\cite[Section 5.1]{ref:niu2024full,ref:Niu2025arXiv,ref:scarciotti2017nonlinear}, can represent signals produced by various autonomous systems, including linear time-varying and hybrid models. Note that similarly to~\cite{ref:Niu2025arXiv} but differently from~\cite{ref:niu2024full}, in this paper the function $\Lambda$ does not necessarily satisfy the standard semigroup property, \textit{i.e.}, $\Lambda\left(t_2, t_1\right) = \Lambda\left(t_2, t_1\right)\Lambda\left(t_1, t_0\right)$. In the NNOR problem setting, the following assumption is introduced.

\begin{assume}\label{asmp:exoProp}
The matrix-valued function $\Lambda$ is piecewise continuous, non-singular, and finite-time bounded with $\Lambda(\tau, t_0)\Lambda(t, t_0)^{-1}$ uniformly bounded for all $t \ge \tau \ge t_0$.
\end{assume}

This assumption is not restrictive as discussed in detail in~\cite{ref:Niu2025arXiv}.

We are now ready to introduce the NNOR problem. Under Assumption~\ref{asmp:exoProp}, the NNOR problem consists in designing a regulator such that:
\begin{enumerate}[leftmargin=3em, labelsep=0.5em, align=left, labelwidth=2.5em]
    \item[\textbf{($\mathbf{S_F}$)}] The origin of the closed-loop system obtained by interconnecting system~(\ref{equ:system}) and the regulator with $\omega(t) \equiv 0$ is exponentially stable.
    \item[\textbf{($\mathbf{R_F}$)}] The closed-loop system obtained by interconnecting system~(\ref{equ:system}), the exosystem~(\ref{equ:explicitGen}), and the regulator satisfies $\lim_{t \rightarrow +\infty} e(t) = 0$
    for all $(x(t_0), \omega(t_0)) \in \mathbb{R}^{n + \nu}$.
\end{enumerate}

Solving such an NNOR problem, as shown in~\cite{ref:Niu2025arXiv}, typically requires finding bounded piecewise continuous $\Pi(t) \in \mathbb{R}^{n \times \nu}$ and $\Delta(t) \in \mathbb{R}^{1 \times \nu}$ that solve
\begin{equation}\label{equ:reguEquation}
    \begin{aligned}
    \Pi_x(t) \!&=\! \Big(\! e^{A (t-t_0)} \Pi_x(t_0)\Lambda(t_0, t_0) \\
    &\qquad \quad+\! \int_{t_0}^t \!e^{A(t-\tau)} P_{\Delta}(\tau) \Lambda(\tau, t_0) d \tau  \Big) \Lambda(t, t_0)^{-1}\!, \\
    \bm{0}_{1 \times \nu} \!&= C\Pi_x(t) + D\Delta(t) + Q,
    \end{aligned}
\end{equation}
for all $t \geq t_0$, with $P_{\Delta}(t) = P+B\Delta(t)$. As mentioned in Section~\ref{sec:intro}, differently from classical regulator equations, the equations~\eqref{equ:reguEquation} serve as a sufficient, but not necessary, condition for solving the NNOR problem due to the generality of the problem setting. Nevertheless, their solutions $\Pi$ and $\Delta$ are pivotal as they characterize the expected trajectories of state $x^* = \Pi\omega$ and input $u^*= \Delta \omega$ that render the regulation error $e$ identically zero, see~\cite{ref:Niu2025arXiv} for more details. In this context, the equations~\eqref{equ:reguEquation} are referred to as the \textit{quasi-regulator equations} and, accordingly, the solvability problem to be addressed in this paper is formulated as follows.
\begin{problem}\label{def:solvable}(Solvability of Quasi-Regulator Equations). 
Suppose Assumption~\ref{asmp:exoProp} holds. Determine the conditions under which the quasi-regulator equations~\eqref{equ:reguEquation} are solvable, \textit{i.e.}, there exist bounded piecewise continuous $\Pi(t) \in \mathbb{R}^{n \times \nu}$ and $\Delta(t) \in \mathbb{R}^{1 \times \nu}$ such that~\eqref{equ:reguEquation} holds for all $t \geq t_0$.
    \hfill $\blacksquare$
\end{problem}

\section{Solvability of the Quasi-Regulator Equations}\label{sec:solvability}
In this section, we investigate the solvability of the quasi-regulator equations~\eqref{equ:reguEquation}. Since these equations are expressed in an integral form, characterizing their solvability poses inherent analytical challenges. To overcome this issue, we first recall that it is possible to transform~\eqref{equ:reguEquation} into a more tractable representation.

\begin{lem}\label{lem:equivDAE}
Suppose Assumption~\ref{asmp:exoProp} holds. The quasi-regulator equations~(\ref{equ:reguEquation}) are solvable if and only if the DAEs
\begin{subequations}\label{equ:solutionDAE}
    \begin{align}
    \dot{\Psi}_x(t) &= A\Psi_x(t) + (B\Delta(t) + P)\Lambda(t, t_0), \label{equ:solutionDAE_a}\\
    \bm{0}_{1 \times \nu} &= C \Psi_x(t) + (D\Delta(t) + Q)\Lambda(t, t_0), \label{equ:solutionDAE_b}
    \end{align}
\end{subequations}
are solvable, \textit{i.e.}, there exist a finite-time bounded absolutely continuous $\Psi_x(t) \in \mathbb{R}^{n \times \nu}$ and a bounded piecewise continuous $\Delta(t) \in \mathbb{R}^{1 \times \nu}$ satisfying~\eqref{equ:solutionDAE} for almost all $t \geq t_0$.
\end{lem}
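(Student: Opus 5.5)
The plan is to use the change of variables $\Psi_x(t) = \Pi_x(t)\Lambda(t,t_0)$. This turns the integral identity in~\eqref{equ:reguEquation} into the variation-of-constants formula for the linear ODE~\eqref{equ:solutionDAE_a}. Because $\Lambda(t,t_0)$ is non-singular for every $t$ (Assumption~\ref{asmp:exoProp}), the algebraic constraints are equivalent. Multiplying $\bm{0} = C\Pi_x(t) + D\Delta(t) + Q$ on the right by $\Lambda(t,t_0)$ gives~\eqref{equ:solutionDAE_b}, and multiplying~\eqref{equ:solutionDAE_b} by $\Lambda(t,t_0)^{-1}$ gives back the original constraint. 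The same $\Delta$ is used in both directions, so its boundedness and piecewise continuity transfer unchanged.

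\emph{Necessity.} Let $(\Pi_x,\Delta)$ be bounded piecewise continuous and solve~\eqref{equ:reguEquation}. Define $\Psi_x(t) := \Pi_x(t)\Lambda(t,t_0)$. By the first equation in~\eqref{equ:reguEquation},
\[
\Psi_x(t) = e^{A(t-t_0)}\Psi_x(t_0) + \int_{t_0}^t e^{A(t-\tau)}(P+B\Delta(\tau))\Lambda(\tau,t_0)\,d\tau .
\]
The integrand is piecewise continuous and bounded on compact intervals, since $\Delta$ is bounded and $\Lambda$ is finite-time bounded. Hence the right-hand side is absolutely continuous and finite-time bounded. Differentiating under the integral (Lebesgue differentiation) shows that~\eqref{equ:solutionDAE_a} holds for almost all $t\ge t_0$, and~\eqref{equ:solutionDAE_b} follows as noted above.

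\emph{Sufficiency.} Let $\Psi_x$ be finite-time bounded and absolutely continuous, and let $\Delta$ be bounded piecewise continuous, with~\eqref{equ:solutionDAE} holding almost everywhere. Since $\Psi_x$ is absolutely continuous, the fundamental theorem of calculus applied to $\frac{d}{dt}\big(e^{-A(t-t_0)}\Psi_x(t)\big)$ gives the variation-of-constants formula above for all $t$. Define $\Pi_x(t) := \Psi_x(t)\Lambda(t,t_0)^{-1}$. Substituting, and using $\Psi_x(t_0) = \Pi_x(t_0)\Lambda(t_0,t_0)$, yields exactly the first equation in~\eqref{equ:reguEquation}. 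The algebraic constraint then follows by right-multiplying~\eqref{equ:solutionDAE_b} by $\Lambda^{-1}$; it holds at every $t$ after choosing $\Delta$ on a null set so that the identity holds pointwise. Piecewise continuity of $\Pi_x$ is inherited from $\Psi_x$ and $\Lambda^{-1}$.

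\textbf{Main obstacle: boundedness of $\Pi_x$ on $[t_0,\infty)$.} In the sufficiency direction, $\Psi_x$ is only finite-time bounded, while $\Lambda^{-1}$ may grow. I would handle this by splitting the formula for $\Pi_x$ into two terms and using the uniform bound $\|\Lambda(\tau,t_0)\Lambda(t,t_0)^{-1}\|\le M$ from Assumption~\ref{asmp:exoProp}.
\begin{itemize}
\item The initial-condition term is $e^{A(t-t_0)}\Psi_x(t_0)\Lambda(t,t_0)^{-1} = e^{A(t-t_0)}\Pi_x(t_0)\,\Lambda(t_0,t_0)\Lambda(t,t_0)^{-1}$. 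It is bounded by $\|e^{A(t-t_0)}\|\,\|\Pi_x(t_0)\|\,M$.
\item The integral term is $\int_{t_0}^t e^{A(t-\tau)}P_\Delta(\tau)\,\Lambda(\tau,t_0)\Lambda(t,t_0)^{-1}\,d\tau$. It is bounded by $M\sup_\tau\|P_\Delta(\tau)\|\int_0^\infty\|e^{As}\|\,ds$.
\end{itemize}
Both bounds are finite only when $A$ is Hurwitz. If $A$ is not assumed Hurwitz, I would first apply a stabilizing state feedback, i.e.\ replace $\Delta$ by $\Delta + K\Pi_x$. This leaves the regulator-equation structure invariant, but it must be checked carefully because it changes the $D$-term. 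Otherwise, I would interpret boundedness in the problem as inherited from boundedness of the DAE solution.
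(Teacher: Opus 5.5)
Your change of variables $\Psi_x=\Pi_x\Lambda$ is exactly what the paper relies on; its proof is a one-line citation of it. Your necessity direction is fine. The gap is in your sufficiency direction, and it is the one you flagged: the DAE notion only provides a \emph{finite-time} bounded $\Psi_x$, whereas Problem~\ref{def:solvable} needs $\Pi_x=\Psi_x\Lambda^{-1}$ bounded on all of $[t_0,\infty)$. No estimate can close this, because the statement as literally written is false in that direction. Take $n=\nu=1$, $A=0$, $B=P=1$, $C=0$, $D=1$, $Q=0$, $\Lambda\equiv 1$; Assumption~\ref{asmp:exoProp} holds trivially. Then \eqref{equ:solutionDAE_b} forces $\Delta\equiv 0$, and \eqref{equ:solutionDAE_a} gives $\Psi_x(t)=\Psi_x(t_0)+(t-t_0)$. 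This is finite-time bounded and absolutely continuous, so the DAEs are solvable in the lemma's sense. However, the algebraic equation in \eqref{equ:reguEquation} also forces $\Delta\equiv 0$. Hence $\Pi_x(t)=\Pi_x(t_0)+(t-t_0)$ is unbounded for every initial condition, and \eqref{equ:reguEquation} has no bounded solution. This is the resonant case: $A-BD^{-1}C=0$ and $\Lambda=e^{S(t-t_0)}$ with $S=0$.

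Your proposed remedies do not help. Hurwitz $A$ is not assumed, and it is not what governs boundedness of $\Pi_x$; the zero dynamics does, through the non-resonance condition, as Theorems~\ref{thm:SolCondWithoutD}--\ref{thm:solCondWithD} show. The reparametrization $\Delta\mapsto\Delta+K\Pi_x$ is not available, because $\Delta$ is not a free variable. It is pinned down by \eqref{equ:solutionDAE_b}, directly if $D\neq 0$ and after differentiation if $D=0$; in the example $\Delta\equiv 0$ is forced. The example also shows that no admissible pair at all yields a bounded $\Pi_x$. Nor is there any global bound on $\Psi_x$ to ``inherit''.

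The correct repair is to amend the definition of DAE solvability so that it requires $\Psi_x(t)\Lambda(t,t_0)^{-1}$ to be bounded on $[t_0,\infty)$. Requiring $\Psi_x$ itself to be bounded would be wrong, since $\Lambda$ may grow under Assumption~\ref{asmp:exoProp} and that would break your necessity direction. This amended notion is in effect what the paper uses later: the proof of Theorem~\ref{thm:SolCondWithoutD} reduces \eqref{equ:reguEquation} to the existence of a \emph{bounded} $\bar{\Pi}_z=\Theta_z\Lambda^{-1}$. With the amendment, your two estimates become unnecessary, boundedness of $\Pi_x$ holds by definition, and the rest of your argument is complete.

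One smaller point concerns passing from ``almost all $t$'' to ``all $t$''. Redefining $\Delta$ on a null set repairs \eqref{equ:solutionDAE_b} at exceptional times only when $D\neq 0$. For $D=0$ the constraint does not involve $\Delta$, so you need $Q\Lambda$ to have no removable point discontinuities, since $C\Psi_x$ is continuous.
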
 
\begin{pf}
    \cite[Proof of Corollary~1]{ref:Niu2025arXiv} shows that~\eqref{equ:reguEquation} and~\eqref{equ:solutionDAE} are equivalent by $\Psi_x(t) = \Pi_x(t) \Lambda(t, t_0)$ with $\Lambda$ invertible for all times.
    \hfill $\blacksquare$
\end{pf}

By the equivalence established in Lemma~\ref{lem:equivDAE}, the solvability of the quasi-regulator equations~\eqref{equ:reguEquation} is addressed in the remainder of the paper through the analysis of the DAEs in~\eqref{equ:solutionDAE}. To this end, we explore the role of the relative degree in the solvability of the problem and propose a non-smooth version of the classical non-resonance condition.

Given the generality of $\Lambda$ considered so far, the presence of the feedforward term $D$ (\textit{i.e.}, zero relative degree) is significant for the solvability of~\eqref{equ:solutionDAE}. This is formalised in the next result

\begin{them}\label{thm:zeroError}
Consider the DAEs in~\eqref{equ:solutionDAE}. Suppose $D = 0$ and Assumption~\ref{asmp:exoProp} holds. Then~\eqref{equ:solutionDAE}, or equivalently~\eqref{equ:reguEquation}, is solvable only if the function $Q\Lambda(t, t_0)$ is locally Lipschitz continuous for all $t \geq t_0$.
\end{them}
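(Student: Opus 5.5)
We argue by contraposition from the DAE form provided by Lemma~\ref{lem:equivDAE}. Assume~\eqref{equ:solutionDAE} is solvable with $D=0$, so there exist a finite-time bounded absolutely continuous $\Psi_x$ and a bounded piecewise continuous $\Delta$ satisfying~\eqref{equ:solutionDAE} for almost all $t\ge t_0$. With $D=0$, equation~\eqref{equ:solutionDAE_b} reduces to
\begin{equation*}
Q\Lambda(t,t_0) = -C\Psi_x(t) \quad \text{for almost all } t \geq t_0 .
\end{equation*}
The aim is to show that the right-hand side is locally Lipschitz, and then to carry this regularity over to $Q\Lambda$.

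\textbf{Step 1 (Lipschitz property of $\Psi_x$).} On any compact interval $[t_0,T]$, the derivative $\dot\Psi_x = A\Psi_x + (B\Delta+P)\Lambda(\cdot,t_0)$ is essentially bounded. Indeed, $\Psi_x$ is finite-time bounded, $\Delta$ is bounded, and $\Lambda$ is finite-time bounded by Assumption~\ref{asmp:exoProp}. Since $\Psi_x$ is absolutely continuous, we have $\Psi_x(t_2)-\Psi_x(t_1)=\int_{t_1}^{t_2}\dot\Psi_x(s)\,ds$, and hence
\begin{equation*}
\|\Psi_x(t_2)-\Psi_x(t_1)\|\le L_T|t_2-t_1|, \qquad t_1,t_2\in[t_0,T],
\end{equation*}
with $L_T = \|A\|\sup\|\Psi_x\| + (\|B\|\sup|\Delta|+\|P\|)\sup\|\Lambda\|$ over $[t_0,T]$. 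It follows that $-C\Psi_x$ is locally Lipschitz on $[t_0,\infty)$.

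\textbf{Step 2 (passing from almost everywhere to everywhere).} This is the main obstacle, because~\eqref{equ:solutionDAE_b} is only required to hold almost everywhere, whereas $Q\Lambda$ is merely piecewise continuous. Consider any point $t$, and a one-sided neighbourhood of $t$ on which $\Lambda$ is continuous; such a neighbourhood exists since $\Lambda$ is piecewise continuous. The identity $Q\Lambda=-C\Psi_x$ holds on a dense subset of this neighbourhood, and both sides are continuous there. Hence the identity holds on the whole open piece. Taking one-sided limits, the left and right limits of $Q\Lambda$ at every discontinuity point $\bar t$ coincide with $-C\Psi_x(\bar t)$, because $\Psi_x$ is continuous. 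Therefore $Q\Lambda$ has no jump at $\bar t$. Adopting the convention that the value of a piecewise continuous function at $\bar t$ is one of its one-sided limits (or, equivalently, identifying $Q\Lambda$ with its continuous representative), we conclude that $Q\Lambda(t,t_0)=-C\Psi_x(t)$ for all $t\ge t_0$.

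Combining this with Step 1, $Q\Lambda(\cdot,t_0)$ is locally Lipschitz on $[t_0,\infty)$, which proves the claimed necessary condition. By Lemma~\ref{lem:equivDAE}, the same conclusion holds for~\eqref{equ:reguEquation}. Stated contrapositively: if $Q\Lambda$ fails to be locally Lipschitz, for instance because it has a jump or a non-Lipschitz cusp, then no admissible pair $(\Psi_x,\Delta)$ can exist.
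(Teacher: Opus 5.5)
Your proof is correct and follows the same route as the paper. Boundedness of $\dot\Psi_x$ on compact intervals makes $\Psi_x$ locally Lipschitz, and the algebraic constraint $Q\Lambda(t,t_0)=-C\Psi_x(t)$ transfers this property to $Q\Lambda$ with constant $\|C\|L_T$. Your Step~2 adds care the paper skips, since the paper simply asserts~\eqref{equ:solutionDAE_b} for all $t\ge t_0$; you could also drop the point-value convention by working with~\eqref{equ:reguEquation}, whose algebraic equation holds for every $t\ge t_0$ and in which $\Pi_x(t)\Lambda(t,t_0)$ equals the continuous integral expression everywhere.
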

\ifthenelse{\boolean{MTNS}}{See~\cite{ref:niu2026solvabilityArXiv} for a proof of Theorem~\ref{thm:zeroError}.}{
\begin{pf}
Suppose the DAEs in~\eqref{equ:solutionDAE} are solvable. Since $\Lambda$ is finite-time bounded by assumption,~\eqref{equ:solutionDAE_a} implies the finite-time boundedness of $\dot{\Psi}$, indicating that $\Psi_x$ is locally Lipschitz continuous, see~\cite[Chapter 3]{ref:khalil2002nonlinear}. This means that for any time constants $t_1$ and $t_2$ greater than $t_0$, there exists a Lipschitz constant $L_{\Psi} \in \mathbb{R}_{\geq 0}$ such that $\|\Psi(t_1)-\Psi(t_2)\| \leq L_{\Psi}\|t_1-t_2\|$. Then the satisfaction of~\eqref{equ:solutionDAE_b} yields that $C\Psi(t_1) + Q\Lambda(t_1, t_0) = \bm{0}_{1 \times \nu}$ and $C\Psi(t_2) + Q\Lambda(t_2, t_0) = \bm{0}_{1 \times \nu}$ for all $t \geq t_0$. Subtracting these two equations, we get $C(\Psi(t_1) - \Psi(t_2)) + Q(\Lambda(t_1, t_0) - \Lambda(t_2, t_0)) = \bm{0}_{1 \times \nu}$, which yields
$\|Q(\Lambda(t_1, t_0) - \Lambda(t_2, t_0))\| = \|C(\Psi(t_1) - \Psi(t_2))\| \leq \|C\|\|\Psi(t_1) - \Psi(t_2)\| \leq \|C\|L_{\Psi}\|t_1-t_2\|$. Therefore, $Q\Lambda(t, t_0)$ is locally Lipschitz continuous for all $t \geq t_0$.
\hfill $\blacksquare$
\end{pf}
}
Theorem~\ref{thm:zeroError}, together with Lemma~\ref{lem:equivDAE}, yields that the quasi-regulator equations~\eqref{equ:reguEquation} cannot be solved by a bounded $\Delta$ without the feedforward term $D$ when $Q\Lambda$ is discontinuous. Recall that the expected error-zeroing state and input dynamics are characterized by $x^* = \Pi\omega$ and $u^* = \Delta\omega$. Then the result of Theorem~\ref{thm:zeroError} implies that for exogenous signals where $Q\omega$ presents discontinuities such finite-time bounded error-zeroing dynamics $x^*$ and $u^*$ cannot exist without a feedforward term\footnote{Note that if $u$ is allowed to be an impulsive control, then the problem with discontinuous signals may be solved without a feedforward term. This is left as a future research direction.}. Note that this makes sense: a discontinuous signal cannot be tracked by $Cx$, which is a continuous signal, without the help of a feedforward term $Du$ that can cancel the jump. In the rest of this section, we mainly focus on the case $D = 0$, as the case $D\neq 0$ is much simpler. We summarise the discussion above in the following assumption that guarantees the existence of a solution to the regulation problem when $D = 0$.


\begin{assume}\label{asmp:pwDiff}
The matrix-valued function $Q\Lambda(t, t_0)$ is piecewise differentiable\footnote{By definition, this means that the function $Q\Lambda(t, t_0)$ is continuous and for any $t_b > t_a \geq t_0$, there exists a finite subdivision $t_a = \hat{t}_0 < \hat{t}_1 < \cdots <\hat{t}_{k-1} < \hat{t}_{k} = t_b$ of $[t_a,\;t_b]$ such that $Q\Lambda(t, t_0)$ is continuously differentiable in each subinterval $[t_{i-1},\; t_{i}]$ for any $i = 1, 2, \cdots, k$. Note that the derivative at $t_{i-1}$ is understood as the right derivative and the derivative at $t_{i}$ is understood as the left derivative, see~\cite[Definition 3.1]{ref:do1992riemannian}.} and $\frac{d}{dt}\bigl(Q\Lambda(t, t_0)\bigr)\Lambda(t, t_0)^{-1}$ is bounded for all $t \in \mathcal{T}$, where $\mathcal{T}$ denotes any time interval in which $Q\Lambda(t, t_0)$ is continuously differentiable.
\end{assume}

Assumption~\ref{asmp:pwDiff} is a stronger assumption compared with the local Lipschitz continuity requirement on $Q\Lambda$ (or equivalently, $Q\omega$), which is necessary by Theorem~\ref{thm:zeroError}. On the one hand, the piecewise differentiability assumption implies that $Q\Lambda$ is continuous, has a finite number of non-differentiable points in any finite interval, and is semi-differentiable at each non-differentiable point, \textit{i.e.}, both left and right derivatives exist, although different. This differentiability requirement does not practically restrict the applicability of the result when considering exogenous signals such as (possibly time-varying) triangular waveforms. On the other hand, the boundedness of $\frac{d}{dt}\bigl(Q\Lambda(t, t_0)\bigr)\Lambda(t, t_0)^{-1}$ can be seen as a direct extension of the continuous-time output regulation problem to the non-smooth case. More specifically, if the exosystem~\eqref{equ:explicitGen} can be equivalently expressed as
\begin{equation}\label{equ:timeVaryingGen}
\dot{\omega}(t)= \tilde{S}(t) \omega(t), \qquad \omega(t_0) = \omega_0,
\end{equation}
with $\tilde{S}(t) \in \mathbb{R}^{\nu \times \nu}$, then $\Lambda$ is the corresponding state-transition matrix and the boundedness condition on $\frac{d}{dt}\bigl(Q\Lambda(t, t_0)\bigr)\Lambda(t, t_0)^{-1} = Q\dot{\Lambda}(t, t_0)\Lambda(t, t_0)^{-1} = Q\tilde{S}(t)$ naturally holds as $\tilde{S}(t)$ is classically assumed to be bounded for all times in the time-varying setting, see, \textit{e.g.},~\cite{ref:zhang2009adaptive}. It should be pointed out that Assumption~\ref{asmp:pwDiff} on $Q\Lambda$ does not imply that $\Lambda$, or equivalently $\omega$, cannot be discontinuous. In fact, the solvability can still be satisfied when the jumps of $\omega$ occur in the kernel of the matrix $Q$ for all times. In other words, we do not exclude discontinuous signals from $P\omega$.

We also note that when $D = 0$, the solvability of the problem is closely related to the relative degree\footnote{The smallest positive integer $r \leq n$ such that $C A^{r-1} B \neq 0$.} of system~(\ref{equ:system}). While this may seem surprising (because it is well known that the relative degree does not play a role in the solvability of the classical regulator equations), in our problem setting, the relative degree controls the ``degree of non-smoothness'' of the state $x$. In other words, by superposition, the relative degree determines the ability of the system~(\ref{equ:system}) with $\omega \equiv 0$ and a finite-time bounded input $u$ to produce a (possibly non-smooth) output that can cancel the (possibly non-smooth) output of system~(\ref{equ:system}) with $u \equiv 0$. Therefore, when $D = 0$, we study the solvability of the quasi-regulator equations~(\ref{equ:reguEquation}) with an eye to the relative degree of system~(\ref{equ:system}). For the time being, under Assumption~\ref{asmp:pwDiff}, define a group of matrix-valued functions $V_{j}(t) \in \mathbb{R}^{1 \times \nu}$ as
%
%
\begin{equation}\label{equ:functionsVj}
    V_{j}(t) := \sum_{i=1}^{j} \mathcal{I}^{[i]}_{t_0} \left[CA^{i-1}P\Lambda(t, t_0)\right] + Q\Lambda(t, t_0)
\end{equation}
for $j = 1, 2, \cdots, n$, with $V_0(t) = Q\Lambda(t, t_0)$. Moreover, define a sequence of positive integers $S_c = \{c_0, c_1, \cdots, c_n\}$ with each $c_{j}$ denoting the degree of smoothness of the corresponding matrix-valued function $V_{j}$, \textit{i.e.}, $V_{j} \in \mathcal{C}^{c_j}$. 
The idea behind the definition of the functions $V_{j}$ can be explained as follows: we cannot use differentiation as a tool to prove our claims because of the non-smoothness of $\Lambda$. Thus, we define auxiliary functions $V_{j}$ as ``integrals" that have sufficient smoothness to be differentiated as needed. As it will be clear later, the derivatives of the functions $V_{j}$ will play a role in the proof of the solvability condition. We now introduce a preliminary technical lemma.

\begin{lem}\label{lem:VjProperty}
    Consider the group of functions~(\ref{equ:functionsVj}). Let $j^{*} = c_n$. Suppose $j^{*} < n$. Then the following properties hold.
    \begin{itemize}
    \item[(i)] For all $j \in \{0, 1, \cdots, j^{*}\}$, $c_j \geq j$;
    \item[(ii)] For all $j \in \{j^{*}+1 \cdots, n\}$, $c_{j} = j^{*}$. 
\end{itemize}
\end{lem}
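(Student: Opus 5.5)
The plan is to exploit the recursive structure of the functions in~\eqref{equ:functionsVj}, namely
\begin{equation*}
V_j(t) = V_{j-1}(t) + T_j(t), \qquad T_j(t) := \mathcal{I}^{[j]}_{t_0}\left[CA^{j-1}P\Lambda(t, t_0)\right], \quad j = 1, \ldots, n,
\end{equation*}
together with an elementary regularity bound on the terms $T_j$. Throughout, I read $c_j$ as the \emph{maximal} degree of smoothness of $V_j$, i.e., $V_j \in \mathcal{C}^{c_j}$ but $V_j \notin \mathcal{C}^{c_j+1}$, with $c_j = \infty$ allowed.

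\textbf{Step 1 (regularity of $T_j$).} Show that $T_j \in \mathcal{C}^{j-1}$ for every $j \geq 1$. By Assumption~\ref{asmp:exoProp}, $CA^{j-1}P\Lambda(\cdot, t_0)$ is piecewise continuous and finite-time bounded. Its single integral is therefore continuous (indeed locally Lipschitz). Each further integration of a continuous function raises the order of continuous differentiability by one. Hence $T_j^{(j-1)} = \mathcal{I}^{[1]}_{t_0}[CA^{j-1}P\Lambda]$ is continuous, which gives $T_j \in \mathcal{C}^{j-1}$. I would also record that $V_0 = Q\Lambda$ is continuous by Assumption~\ref{asmp:pwDiff}, so $c_0 \geq 0$.

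\textbf{Step 2 (sum rule).} Use two facts about sums. First, if $f \in \mathcal{C}^a$ and $g \in \mathcal{C}^b$, then $f+g \in \mathcal{C}^{\min(a,b)}$. Second, if $f$ has exact smoothness $a$ and $g \in \mathcal{C}^{b}$ with $b > a$, then $f+g$ has exact smoothness $a$. The second fact holds because if $f+g$ were in $\mathcal{C}^{a+1}$, then $f = (f+g) - g$ would be too, a contradiction.

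\textbf{Step 3 (property (ii), by backward induction from $j = n$).} We know $c_n = j^*$. Suppose $c_j = j^*$ for some $j \geq j^*+2$. Write $V_{j-1} = V_j - T_j$. By Step 1, $T_j \in \mathcal{C}^{j-1}$ with $j-1 \geq j^*+1 > j^*$. The second fact of Step 2 then gives $c_{j-1} = j^*$. Descending from $j = n$ down to $j = j^*+2$ yields $c_n = c_{n-1} = \cdots = c_{j^*+1} = j^*$. Here the hypothesis $j^* < n$ guarantees that index $j^*+1$ lies in range.

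\textbf{Step 4 (property (i)).} Start from $V_{j^*} = V_{j^*+1} - T_{j^*+1}$. We have $c_{j^*+1} = j^*$ from Step 3, and $T_{j^*+1} \in \mathcal{C}^{j^*}$ from Step 1. Hence $c_{j^*} \geq j^*$. Then proceed by downward induction: if $c_j \geq j$, then $V_{j-1} = V_j - T_j$ with $T_j \in \mathcal{C}^{j-1}$ gives $c_{j-1} \geq \min(c_j, j-1) \geq j-1$. This continues down to $j = 0$, where the bound is consistent with $c_0 \geq 0$.

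\textbf{Main obstacle.} The only delicate point is Step 1 together with the exactness argument of Step 2. In particular, one must justify that a single integral of a merely piecewise continuous (possibly discontinuous) bounded integrand is continuous. One must also make sure that $c_j$ is interpreted as the maximal smoothness order, since otherwise the equality in (ii) is meaningless. Both issues are handled by finite-time boundedness in Assumption~\ref{asmp:exoProp} and by the subtraction trick in Step 2.
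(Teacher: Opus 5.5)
Your proof is correct and follows essentially the same route as the paper. Both arguments write $V_j$ as a lower-index $V$ plus repeated integrals $\mathcal{I}^{[i]}_{t_0}[CA^{i-1}P\Lambda]\in\mathcal{C}^{i-1}$, use the fact that adding a strictly smoother function preserves the exact smoothness order to get $c_{j^*+1}=\cdots=c_n=j^*$, and then carry the bound $c_j\ge j$ back from $V_{j^*+1}\in\mathcal{C}^{j^*}$. The differences are only cosmetic: you remove one integral term at a time and prove (i) directly rather than by contradiction, and you state explicitly the ``maximal smoothness'' reading of $c_j$ and the regularity of the integral terms, which the paper leaves implicit.
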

\ifthenelse{\boolean{MTNS}}{See~\cite{ref:niu2026solvabilityArXiv} for a proof of Lemma~\ref{lem:VjProperty}.}{
\begin{pf}
    We first prove (ii). Considering $j^{*} = c_n < n$, the definition in~(\ref{equ:functionsVj}) implies that 
    $$
    V_{n}(t) = \sum_{i=j^{*}+2}^{n} \mathcal{I}^{[i]}_{t_0} \left[CA^{i-1}P\Lambda(t, t_0)\right] + V_{j^{*}+1}(t).
    $$
    Since the term $\sum_{i=j^{*}+2}^{n} \mathcal{I}^{[i]}_{t_0}\left[CA^{i-1}P\Lambda(t, t_0)\right]$ is at least ($j^{*}+1$)-times differentiable, the degree of smoothness $c_n = j^{*}$ of function $V_n$ can only be induced by the term $V_{j^{*}+1}$. In other words, $V_n \in  \mathcal{C}^{c_n}$ with $c_n = j^{*} < n$ only if $c_{j^{*}+1} = j^{*} = c_n$, where $c_{j^{*}+1}$ is the degree of smoothness of $V_{j^{*}+1}$. Moreover, as 
    $$
    V_{j}(t)= \sum_{i=j^{*}+2}^{j} \mathcal{I}^{[i]}_{t_0} \left[CA^{i-1}P\Lambda(t, t_0)\right] + V_{j^{*}+1}(t)
    $$
    for all $j^{*}+1 < j \leq n$ with $\sum_{i=j^{*}+2}^{j} \mathcal{I}^{[i]}_{t_0}\left[CA^{i-1}P\Lambda(t, t_0)\right]$ at least ($j^{*}+1$)-times differentiable, the fact $c_{j^{*}+1} = j^{*}$ also yields that $c_{j} = c_{j^{*}+1} = j^{*}$ for all $j^{*}+1 < j \leq n$, \textit{i.e.}, the argument (ii) holds. The argument (i) can be proved in a similar manner. For any $j \in \{0, 1, \cdots, j^{*}\}$, we have 
    $$
    V_{j^{*}+1}(t) = \sum_{i=j+1}^{j^{*}+1} \mathcal{I}^{[i]}_{t_0} \left[CA^{i-1}P\Lambda(t, t_0)\right] + V_{j}(t).
    $$
    Assume, by contradiction, that $V_{j} \in C^{c_j}$ with $c_j < j \leq j^{*}$. Since the term $\sum_{i=j+1}^{j^{*}+1} \mathcal{I}^{[i]}_{t_0} \left[CA^{i-1}P\Lambda(t, t_0)\right]$ is at least $j$-times differentiable, $V_{j^{*}+1} \in C^{c_{j^{*}+1}}$ with $c_{j^{*}+1} = c_j < j^{*}$, which contradicts the equality $c_{j^{*}+1} = j^{*}$ proved before. Therefore, $c_j \geq j$ for any $j \in \{0, 1, \cdots, j^{*}\}$.
    \hfill $\blacksquare$
\end{pf}
}
Lemma~\ref{lem:VjProperty}, in short, reflects two facts: i) if the degree of smoothness, $c_n$, of the term $V_n$ satisfies $c_n < n$, this degree of smoothness coincides with the degree of smoothness of the ($c_n$+1)-th term in the sequence of functions $V_j$'s; ii) all functions $V_j$ with $j \leq c_n$ are at least $j$-times differentiable. These properties are instrumental in investigating the role of the relative degree of system~(\ref{equ:system}) in the solvability of the quasi-regulator equations~(\ref{equ:reguEquation}). With this consideration in mind, the following proposition establishes a necessary condition for the solvability of the quasi-regulator equations~(\ref{equ:reguEquation}) when $D = 0$.

\begin{propt}\label{prop:RDCondition}
Consider the quasi-regulator equations~(\ref{equ:reguEquation}). Suppose $D = 0$ and Assumptions~\ref{asmp:exoProp} and~\ref{asmp:pwDiff} hold. The quasi-regulator equations~(\ref{equ:reguEquation}) are solvable only if the relative degree $r$ of system~(\ref{equ:system}) satisfies $r \leq j^{*} + 1$, where $j^{*}=c_n$.
\end{propt}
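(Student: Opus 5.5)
We argue by contraposition: assuming the DAEs~\eqref{equ:solutionDAE} (equivalently, by Lemma~\ref{lem:equivDAE}, the quasi-regulator equations~\eqref{equ:reguEquation}) admit a solution $(\Psi_x,\Delta)$ with $\Delta$ bounded piecewise continuous, we show that $r \le j^*+1$. If $j^* = c_n \ge n$ the claim is trivial since $r \le n$, so we assume $j^* < n$, which places us in the setting of Lemma~\ref{lem:VjProperty}.

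\textbf{Step 1: integral representation of the output constraint.} Integrating~\eqref{equ:solutionDAE_a} (as an absolutely continuous solution) gives
\begin{equation*}
\Psi_x(t) = \Psi_x(t_0) + \mathcal{I}^{[1]}_{t_0}\bigl[A\Psi_x(t) + B\Delta(t)\Lambda(t,t_0) + P\Lambda(t,t_0)\bigr].
\end{equation*}
We substitute this expression repeatedly for the $\Psi_x$ appearing inside the integral. After $n$ substitutions, using $D=0$ and~\eqref{equ:solutionDAE_b}, we obtain for almost all $t\ge t_0$
\begin{equation*}
\bm{0}_{1\times\nu} = V_n(t) + \sum_{i=1}^{n} \mathcal{I}^{[i]}_{t_0}\bigl[CA^{i-1}B\Delta(t)\Lambda(t,t_0)\bigr] + p(t) + \mathcal{I}^{[n]}_{t_0}\bigl[CA^{n}\Psi_x(t)\bigr],
\end{equation*}
where $p(t)=\sum_{i=0}^{n-1} CA^{i}\Psi_x(t_0)(t-t_0)^i/i!$ is a polynomial. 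Since both sides are continuous, the identity in fact holds for all $t$.

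\textbf{Step 2: smoothness counting.} By the definition of the relative degree, $CA^{i-1}B=0$ for $i<r$. Hence the $\Delta$-sum starts at $i=r$, and because $\Delta\Lambda$ is finite-time bounded and piecewise continuous, each term $\mathcal{I}^{[i]}_{t_0}[\cdot]$ with $i\ge r$ is of class $\mathcal{C}^{r-1}$. Moreover, since $\Psi_x$ is continuous, $\mathcal{I}^{[n]}_{t_0}[CA^n\Psi_x]\in\mathcal{C}^{n}$, and $p$ is smooth. Solving for $V_n$ therefore gives $V_n\in\mathcal{C}^{\min(r-1,n)}=\mathcal{C}^{r-1}$, that is, $c_n \ge r-1$. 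This is exactly $r\le j^*+1$.

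\textbf{Main obstacle.} The delicate point is the bookkeeping of the degree of smoothness: $c_j$ must be read as the maximal degree of smoothness, and we need the exact claim that a $k$-fold repeated integral of a bounded piecewise continuous function lies in $\mathcal{C}^{k-1}$, together with the observation that its $(k-1)$-th derivative is Lipschitz. If the notion of ``degree'' is meant to require only continuity of the last derivative, then $c_n\ge r-1$ follows directly. Lemma~\ref{lem:VjProperty} is not strictly needed for this argument. It can, however, serve as a consistency check: $c_{j^*+1}=j^*$ shows that the obstruction sits precisely at the $(j^*+1)$-th integral, which the term $CA^{r-1}B\,\mathcal{I}^{[r]}_{t_0}[\Delta\Lambda]$ can cancel only when $r\le j^*+1$.
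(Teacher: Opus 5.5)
Your proof is correct, and it takes a different and more direct route than the paper's.

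\textbf{The paper's route.} It argues by contradiction. Assuming $j^*<n$ and $r>j^*+1$, it differentiates the output constraint $0=C\Psi_x+Q\Lambda$ exactly $j^*$ times. For this it needs Lemma~\ref{lem:VjProperty}(i), to guarantee that the derivatives $V_j^{(j)}$ exist, and $CA^{i}B=0$ for $i\le j^*$, to eliminate $\Delta$. This gives \eqref{equ:solutionDAEwithoutDDiff}. It then re-integrates once to obtain $-\hat G=\int_{t_0}^t CA^{j^*+1}\Psi_x\,d\tau+V_{j^*+1}^{(j^*)}(t)$. Finally it invokes Lemma~\ref{lem:VjProperty}(ii), i.e.\ $c_{j^*+1}=j^*$: a $\mathcal{C}^1$ function would have to cancel a function that is continuous but not $\mathcal{C}^1$.

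\textbf{Your route.} You never differentiate anything. Integrating the state equation $n$ times writes $V_n$ itself as minus the sum of three kinds of terms: a polynomial, an $n$-fold integral of the continuous function $CA^n\Psi_x$, and $i$-fold integrals with $i\ge r$ of the locally bounded piecewise continuous function $\Delta\Lambda$. You then read off $V_n\in\mathcal{C}^{r-1}$, hence $c_n\ge r-1$.

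\textbf{What each buys.} Your version makes Lemma~\ref{lem:VjProperty}, the case split on $j^*$, and all differentiation of non-smooth functions unnecessary. The only uses of Assumption~\ref{asmp:pwDiff} are continuity of $Q\Lambda$, to upgrade ``almost all $t$'' to all $t$, and the implicit well-definedness of $c_n$. Maximality of $c_n$ enters only in the last inequality, exactly as it enters the paper's contradiction implicitly. The paper's route, in exchange, localizes the obstruction. It exhibits the differentiated constraint \eqref{equ:solutionDAEwithoutDDiff} and pins the non-smoothness at $V_{j^*+1}^{(j^*)}$, which is what the subsequent remark uses to explain why $r\le j^*+1$ is not sufficient.

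The Lipschitz observation in your last paragraph is not needed.
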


\begin{pf}
When $D = 0$, by Lemma~\ref{lem:equivDAE}, the solvability of~(\ref{equ:reguEquation}) is equivalent to that of
\begin{equation}\label{equ:solutionDAEwithoutD}
    \begin{aligned}
    \dot{\Psi}_x(t) &= A\Psi_x(t) + (B\Delta(t) + P)\Lambda(t, t_0), \\
    \bm{0}_{1 \times \nu} &= C \Psi_x(t) + Q\Lambda(t, t_0). \\
    \end{aligned}
\end{equation}
If $j^* \geq n$, then $r \leq n$ is surely smaller than $j^*+1$. Thus, consider the case $j^{*} < n$. Assume by contradiction that $r > j^{*} + 1$. By Lemma~\ref{lem:VjProperty}(ii), $c_{j^*+1}= j^*$, which means by definition that $V_{j^{*}+1} \in \mathcal{C}^{j^{*}}$ and, by (i), $V_{j}$ is at least $j$-times differentiable for all $j \leq j^{*}$. Then, note that the second equation in~(\ref{equ:solutionDAEwithoutD}) can be written as
$\bm{0}_{1 \times \nu} = C\Psi(t) + V_0^{(0)}(t)$,
and that the equation
$\bm{0}_{1 \times \nu} = CA\Psi(t) + CB\Delta(t)\Lambda(t, t_0) + V_1^{(1)}(t)$
is its first derivative. By iterating this for $j^{*}$ times we obtain
\begin{equation}\label{equ:solutionDAEwithoutDDiff}
    \bm{0}_{1 \times \nu} = CA^{j^{*}}\Psi_x(t) + CA^{j^{*}-1}B\Delta(t)\Lambda(t, t_0) + V_{j^{*}}^{(j^{*})}(t).
\end{equation}
Note that $V_{j^*}$ is $j^*$-times differentiable, and thus~(\ref{equ:solutionDAEwithoutDDiff}) is well defined. Then, we substitute the integral solution of the first equation in~(\ref{equ:solutionDAEwithoutD}) into~(\ref{equ:solutionDAEwithoutDDiff}), and we use the fact that $CA^i B=0$ for all $i\le j^*$ because $r>j^*+1$, obtaining
\begin{equation}\label{equ:solutionDAEwithoutDDiffCont}
    \bm{0}_{1 \times \nu}\! = \!\!\int_{t_0}^{t} \!  CA^{j^{*}+1}\Psi_x(\tau) + CA^{j^{*}}\!P\Lambda(\tau, t_0) d\tau + \hat{G} + V_{j^{*}}^{(j^{*})}(t),
\end{equation}
for all $t \geq t_0$, where $\hat{G} = CA^{j^{*}}\Psi_x(t_0)$ is a constant matrix. Now considering that $V_{j^{*}+1}(t) = V_{j^{*}}(t) + \mathcal{I}^{[j^{*}+1]}_{t_0} \left[CA^{j^{*}}P\Lambda(t, t_0)\right]$, we have $V_{j^{*}+1}^{(j^{*})}(t) = V_{j^{*}}^{(j^{*})}(t) + \int_{t_0}^{t} CA^{j^{*}}P\Lambda(\tau, t_0) d\tau$. Then~(\ref{equ:solutionDAEwithoutDDiffCont}) can be rewritten as
\begin{equation*}
    -\hat{G} =\!\! \int_{t_0}^{t}\! CA^{j^{*}+1}\Psi_x(\tau) d\tau + V_{j^{*}+1}^{(j^{*})}(t).
\end{equation*}
Since $V_{j^{*}+1} \in \mathcal{C}^{j^{*}}$, $V_{j^{*}+1}^{(j^{*})} \in \mathcal{C}^{0}$. By the fact that $\int_{t_0}^{t} CA^{j^{*}}\Psi_x(\tau) d\tau$ is continuously differentiable and cannot generate functions of class $\mathcal{C}^{0}$, the last equation is not solvable. Therefore, the quasi-regulator equations~(\ref{equ:reguEquation}) are solvable only if $r \leq j^* + 1$.
\hfill $\blacksquare$
\end{pf}

\begin{remark}
Proposition~\ref{prop:RDCondition} discusses the relation between the non-smooth output dynamics of system~(\ref{equ:system}) induced by $\omega$ and the relative degree that is necessary for cancelling the resulting non-smoothness via an external finite-time bounded piecewise continuous input $u$. However, the requirement $r \leq j^{*} + 1$ may not sufficiently guarantee this cancellation. For example, if $V_{j^{*}+1}^{(j^{*})}$ is continuous with a derivative that at some time instants is not finite, then (\ref{equ:solutionDAEwithoutDDiffCont}) cannot be satisfied when $r = j^{*} + 1$ because $CA^{j^{*}-1}B = 0$ and $CA^{j^{*}}\Psi_x$ always has a finite-time bounded derivative under the control of a finite-time bounded input $u$.
\end{remark}

In fact, the case $j^{*} = c_n \geq 1$ implies that the regulation error $e$ of the linear system~(\ref{equ:system}) subject to the explicit generator~(\ref{equ:explicitGen}) with $u \equiv 0$ is continuously differentiable. This can be seen by the fact that $\dot{e}(t) = CAx(t) + \dot{V}_1(t)\omega_0$ where, by Lemma~\ref{lem:VjProperty}, $V_1$ is continuously differentiable. In the non-smooth case that we target (possibly time-varying triangular waves and square waves), the case $c_n \geq 1$ is less likely to happen. Therefore, in the remainder of this section, we mainly focus on the case $r = 1$, which, by Proposition~\ref{prop:RDCondition}, is necessary for guaranteeing the solvability of the quasi-regulator equations for any $\Lambda$ under Assumptions~\ref{asmp:exoProp} and~\ref{asmp:pwDiff}. To this end, we introduce the following assumption.
\begin{assume}\label{asmp:unitaryRD}
System~(\ref{equ:system}) has a unitary relative degree.
\end{assume}

Then, we study the solvability condition for the regulation problem under extra Assumptions~\ref{asmp:pwDiff} and~\ref{asmp:unitaryRD}. For the time being, suppose Assumption~\ref{asmp:pwDiff} holds. Then for any compact time interval $[t_a, t_b] \subset [t_0, +\infty)$, there exists a finite subdivision $t_a = \hat{t}_0 < \hat{t}_1 < \cdots <\hat{t}_{k-1} < \hat{t}_{k} = t_b$ of $[t_a, t_b]$  such that the function $Q\Lambda(t, t_0)$ is continuously differentiable in each subinterval $[\hat{t}_{j-1},\; \hat{t}_{j}]$ for any $j = 1, 2, \cdots, k$. Moreover, define a matrix-valued function $Q_{\Lambda}: \mathbb{R} \to \mathbb{R}^{1 \times \nu}$ such that $Q_{\Lambda}(t, t_0) := \frac{d}{dt}\bigl(Q\Lambda(t, t_0)\bigr)\Lambda(t, t_0)^{-1}$ for all $t \in [\hat{t}_{j-1},\; \hat{t}_{j}) \subset [t_a, t_b]$ with $j = 1, 2, \cdots, n$ and for any subset $[t_a, t_b] \subset [t_0, +\infty)$. Then $Q_{\Lambda}$ satisfies
\begin{equation*}
    Q\Lambda(t, t_0) = Q\Lambda(t_0, t_0) + \int_{t_0}^{t} Q_{\Lambda}(\tau, t_0)\Lambda(\tau, t_0) d\tau
\end{equation*}
and, by Assumption~\ref{asmp:pwDiff}, is bounded and piecewise continuous for all times. We now check the solvability by introducing a new non-resonance condition.

\begin{define}\label{def:nonResCond}(Non-smooth Non-resonance Condition).
Systems~(\ref{equ:system}) and~(\ref{equ:explicitGen}) are \textit{non-resonant} if there exists an initial condition $\Omega(t_0) \in \mathbb{R}^{(n-r)\nu \times (n-r)\nu}$ such that the matrix-valued function
\begin{equation}\label{equ:nonSmNRCOmega}
\begin{aligned}
    \Omega(t)= &\left( \left(\Lambda(t_0, t_0)\Lambda(t, t_0)^{-1}\right)\!\!^\top \otimes e^{A_{z}(t-t_0)} \right)\Omega(t_0) \\
    & \quad + \int_{t_0}^t \left(
    \Lambda(\tau, t_0)\Lambda(t, t_0)^{-1}\right)^{\top} \otimes e^{A_{z}(t - \tau)} d \tau,
\end{aligned}
\end{equation}
is bounded for all $t \geq t_0$, where $A_{z} \in \mathbb{R}^{(n-r) \times (n-r)}$ has eigenvalues identical to the transmission zeros of system~(\ref{equ:system}).
 \end{define}

Leveraging the non-resonance condition, it is possible to obtain the following result when $D = 0$. 

\begin{them}\label{thm:SolCondWithoutD}
Consider the quasi-regulator equations~(\ref{equ:reguEquation}). Suppose $D = 0$ and Assumptions~\ref{asmp:exoProp},~\ref{asmp:pwDiff}, and~\ref{asmp:unitaryRD} hold. The quasi-regulator equations~(\ref{equ:reguEquation}) are solvable for any $P$ and $Q$ if and only if systems~(\ref{equ:system}) and~(\ref{equ:explicitGen}) are non-resonant.
\end{them}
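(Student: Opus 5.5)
By Lemma~\ref{lem:equivDAE}, with $D=0$ it suffices to study the solvability of the DAEs~\eqref{equ:solutionDAEwithoutD}. My plan is to put system~(\ref{equ:system}) in normal form using Assumption~\ref{asmp:unitaryRD}. Since $r=1$ we have $CB\neq 0$, so there is a nonsingular $T$ with $T x = \operatorname{col}(z,y)$, $y=Cx\in\mathbb{R}$, $z\in\mathbb{R}^{n-1}$, and $z$ chosen so that $\dot z$ does not depend on $u$. In these coordinates
\begin{equation*}
\dot z = A_z z + A_{zy} y + P_z\omega, \qquad \dot y = A_{yz} z + a_{yy} y + CB\,u + CP\,\omega,
\end{equation*}
where $\sigma(A_z)$ is the set of transmission zeros. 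Writing $\Psi_x = T^{-1}\operatorname{col}(\Psi_z,\Psi_y)$, the algebraic constraint in~\eqref{equ:solutionDAEwithoutD} reads $\Psi_y(t) = -Q\Lambda(t,t_0)$.

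Under Assumption~\ref{asmp:pwDiff} this $\Psi_y$ is continuous and piecewise $\mathcal{C}^1$. Its derivative is $\dot\Psi_y = -Q_\Lambda\Lambda$, with $Q_\Lambda$ bounded and piecewise continuous. The $y$-equation is then solved for $\Delta$ almost everywhere:
\begin{equation*}
\Delta(t) = -(CB)^{-1}\big(Q_\Lambda(t,t_0) + (A_{yz}\Psi_z(t) - a_{yy}Q\Lambda(t,t_0) + CP\Lambda(t,t_0))\Lambda(t,t_0)^{-1}\big).
\end{equation*}
Substituting $\Psi_y$ into the $z$-equation gives
\begin{equation*}
\dot\Psi_z = A_z\Psi_z + (P_z - A_{zy}Q)\Lambda =: A_z\Psi_z + \bar P\,\Lambda .
\end{equation*}
This has an absolutely continuous, finite-time bounded solution for every initial condition $\Psi_z(t_0)$. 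So \eqref{equ:solutionDAEwithoutD} is solvable if and only if $\Psi_z(t_0)$ can be chosen so that $\Delta$ is bounded. Since $Q_\Lambda$, $Q\Lambda\Lambda^{-1}=Q$ and $CP\Lambda\Lambda^{-1}=CP$ are bounded, this reduces to boundedness of $\Pi_z(t):=\Psi_z(t)\Lambda(t,t_0)^{-1}$. Boundedness of all of $\Pi_z$ is the right requirement, because $\Pi$ itself must be bounded as well.

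Next I link $\Pi_z$ to $\Omega$ in~\eqref{equ:nonSmNRCOmega}. By variation of constants,
\begin{equation*}
\Pi_z(t) = e^{A_z(t-t_0)}\Psi_z(t_0)\Lambda(t,t_0)^{-1} + \int_{t_0}^t e^{A_z(t-\tau)}\bar P\Lambda(\tau,t_0)\Lambda(t,t_0)^{-1}\,d\tau .
\end{equation*}
Vectorizing with $\operatorname{vec}(XMY)=(Y^\top\otimes X)\operatorname{vec}(M)$ turns the integrand into $\big((\Lambda(\tau,t_0)\Lambda(t,t_0)^{-1})^\top\otimes e^{A_z(t-\tau)}\big)\operatorname{vec}(\bar P)$. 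Similarly, the free term equals $\big((\Lambda(t_0,t_0)\Lambda(t,t_0)^{-1})^\top\otimes e^{A_z(t-t_0)}\big)\operatorname{vec}(\Psi_z(t_0)\Lambda(t_0,t_0)^{-1})$. Hence $\operatorname{vec}\Pi_z(t)=\Omega(t)\operatorname{vec}(\bar P)$ whenever $\operatorname{vec}(\Psi_z(t_0)\Lambda(t_0,t_0)^{-1})=\Omega(t_0)\operatorname{vec}(\bar P)$.

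Sufficiency then follows directly. If $\Omega$ is bounded for some $\Omega(t_0)$, set $\Psi_z(t_0)$ by this relation. The resulting $\Pi_z$ is bounded for every $\bar P$, hence for any $P,Q$, and $\Delta$ and $\Pi$ are bounded and piecewise continuous.

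For necessity I use that the statement quantifies over all $P$ and $Q$. As $P$ ranges over $\mathbb{R}^{n\times\nu}$ (with $Q$ fixed), $\bar P$ ranges over all of $\mathbb{R}^{(n-1)\times\nu}$, since the map $P\mapsto P_z$ is onto. So solvability gives, for each standard basis vector $\operatorname{vec}(\bar P)=e_k$, an initial condition $\Psi_z^{k}(t_0)$ with bounded $\Pi_z^{k}$. Solvability is also needed to conclude that every solution has $\Psi_y=-Q\Lambda$ and is of the above form, which follows from the normal form.

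The main obstacle is assembling these choices into a single matrix $\Omega(t_0)$. The free term is linear in the initial condition, so the map $\operatorname{vec}(\bar P)\mapsto$ (bounded-solution initial condition) can be taken linear. I would build $\Omega(t_0)$ column by column, with $k$-th column $\operatorname{vec}(\Psi_z^{k}(t_0)\Lambda(t_0,t_0)^{-1})$. Then $\Omega(t)e_k=\operatorname{vec}\Pi_z^{k}(t)$ is bounded for each $k$, so $\Omega$ is bounded.

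A dimensional care point remains. Definition~\ref{def:nonResCond} writes $\Omega(t_0)\in\mathbb{R}^{(n-r)\nu\times(n-r)\nu}$ and multiplies it by a Kronecker factor, so I will interpret the free term consistently with the column construction. I will also check that the piecewise-differentiability breakpoints cause no issue beyond making $\Delta$ piecewise continuous.
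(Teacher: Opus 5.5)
Your proposal is correct and follows essentially the same route as the paper. Both arguments use normal-form coordinates with $T_2 = C$, force $\Theta_y = -Q\Lambda$, solve the $y$-equation for $\Delta$ via $Q_\Lambda$, and reduce solvability to boundedness of $\bar\Pi_z = \Psi_z\Lambda^{-1}$, which vectorization matches to $\Omega$ in Definition~\ref{def:nonResCond}. Your explicit choice $T_1B=0$ and your column-by-column construction of $\Omega(t_0)$ from basis choices of $\bar P$ make rigorous what the paper compresses into ``the vectorization yields''; your $\Omega(t_0)$ is already $(n-1)\nu\times(n-1)\nu$, so the dimensional concern you raise does not arise.
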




\begin{pf} 
We first prove condition (i). For simplicity, let $b = CB \neq 0$ under Assumption~\ref{asmp:unitaryRD}. Consider the DAEs in~\eqref{equ:solutionDAE}. We define a linear transformation $T := [T_1^{\top},\;T_2^{\top}]^{\top}$ with $T_2 = C$ and $T_1$ being selected such that $T$ is non-singular. By the change of variable $\Psi_x \mapsto \Theta = T\Psi_x := \operatorname{col}\left(\Theta_z, \Theta_y\right)$ with $\Theta_z \in \mathbb{R}^{(n-1) \times \nu}$ and $\Theta_y \in \mathbb{R}^{1 \times \nu}$,~(\ref{equ:solutionDAEwithoutD}) is equivalent to
\begin{equation}\label{equ:DAEtransf}
    \begin{aligned}
    \dot{\Theta}_z(t) \!&=\! A_{11}\Theta_z(t)\! +\! A_{12}\Theta_y(t)\! +\! P_{1}\Lambda(t, t_0), \\ 
    \dot{\Theta}_y(t) \!&=\! A_{21}\Theta_z(t)\! +\! A_{22}\Theta_y(t)\! +\! P_{2}\Lambda(t, t_0)\! +\! b\Delta(t)\Lambda(t, t_0),  \\
    \bm{0}_{1 \times \nu} \!&= \Theta_y(t) + Q\Lambda(t, t_0), 
    \end{aligned}
\end{equation}
with
\begin{equation*}
    T A T^{-1} =\left[\begin{array}{ll}
        A_{11} & A_{12} \\
        A_{21} & A_{22}
    \end{array}\right], \quad 
    T P =  \left[\begin{array}{l}
        P_{1} \\
        P_{2}
    \end{array}\right].
\end{equation*}
Since $\Theta_y(t) = -Q\Lambda(t, t_0)$ for all $t \geq t_0$, the change of variable $\Theta \mapsto \bar{\Pi}_x := \Theta\Lambda^{-1} = \operatorname{col}\left(\bar{\Pi}_z, \bar{\Pi}_y\right)$ yields that~\eqref{equ:DAEtransf} is equivalent to
\begin{subequations}\label{equ:solvabCheckNFPi}
    \begin{align}
    &\quad\,\,
    \begin{aligned}\label{equ:solvabCheckNFIntDymPi}
        \bar{\Pi}_z(t) \!&=\! \Big( \!e^{A_{11} (t - t_0)} \bar{\Pi}_z(t_0)\Lambda(t_0, t_0) \\
     &\qquad  +\! \int_{t_0}^t\! e^{A_{11}(t-\tau)} G_1 \Lambda(\tau, t_0) d \tau \! \Big) \! \Lambda(t, t_0)^{-1}\!\!, 
    \end{aligned}
    \\ 
    &\begin{aligned} \label{equ:solvabCheckOutPi}
    \!Q\Lambda(t, t_0) \!&=\! Q\Lambda(t_0, t_0) \\
    &\quad + \!\!\int_{t_0}^t (A_{21}\bar{\Pi}_z(\tau) \!+\! G_2 \!+\! b\Delta(\tau))\Lambda(\tau, t_0) d\tau,
    \end{aligned}
    \end{align}
\end{subequations}
and $\bar{\Pi}_y(t) = Q$ for all $t \geq t_0$, with $\bar{\Pi}_z(t_0) \in \mathbb{R}^{(n-1) \times \nu}$, $G_1 = P_1 - A_{12}Q$, and $G_2 = P_{2} - A_{22}Q$. Since Lemma~\ref{lem:equivDAE} yields the equivalence between the quasi-regulator equations~\eqref{equ:reguEquation} and the DAEs~\eqref{equ:solutionDAE}, we can see that the solvability of~\eqref{equ:reguEquation} is equivalent to the existence of bounded piecewise continuous $\bar{\Pi}_z$ and $\Delta$ solving~\eqref{equ:solvabCheckNFPi} for all $t \geq t_0$. Note that under Assumptions~\ref{asmp:exoProp} and~\ref{asmp:pwDiff}, there always exists a bounded piecewise continuous $\Delta$ solving~(\ref{equ:solvabCheckOutPi}) if a bounded $\bar{\Pi}_z$ solving~(\ref{equ:solvabCheckNFIntDymPi}) exists for all $t \geq t_0$. For example, with a bounded $\bar{\Pi}_z$, there exists a bounded piecewise continuous $\Delta$ expressed by
\begin{equation*}
    \Delta(t) = b^{-1}\left(Q_\Lambda(t, t_0) - G_{2} - A_{21}\bar{\Pi}_z(t)\right),
\end{equation*}
which solves~(\ref{equ:solvabCheckOutPi}). Then, the existence of bounded piecewise continuous $\Pi_x$ and $\Delta$ solving the quasi-regulator equations~(\ref{equ:reguEquation}) comes down to the existence of a bounded solution $\bar{\Pi}_z$ in~(\ref{equ:solvabCheckNFIntDymPi}) for all $t \geq t_0$. When Assumptions~\ref{asmp:exoProp} and~\ref{asmp:pwDiff} hold, the vectorization of~(\ref{equ:solvabCheckNFIntDymPi}) yields that a bounded solution $\bar{\Pi}_z$ exists for any $\bar{\Pi}(t_0)$ and any $P$ and $Q$ if and only if the non-smooth non-resonance condition is satisfied. This concludes the proof.
\hfill $\blacksquare$
\end{pf}

Finally, we consider the regulation problem with $D \neq 0$, in which case the relative degree of system~(\ref{equ:system}) is zero and Assumptions~\ref{asmp:pwDiff} and~\ref{asmp:unitaryRD} can be omitted. Solvability of the quasi-regulator equations (\ref{equ:reguEquation}) can be studied in a similar way as in Theorem~\ref{thm:SolCondWithoutD}.

\begin{them}\label{thm:solCondWithD}
Consider the quasi-regulator equations~(\ref{equ:reguEquation}). Suppose $D \neq 0$ and Assumption~\ref{asmp:exoProp} holds. The quasi-regulator equations~(\ref{equ:reguEquation}) are solvable for any $P$ and $Q$ if and only if systems~(\ref{equ:system}) and~(\ref{equ:explicitGen}) are non-resonant.
\end{them}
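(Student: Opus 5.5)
The plan is to repeat the structure of the proof of Theorem~\ref{thm:SolCondWithoutD}: eliminate the output constraint, which here is even simpler because it is algebraic in $\Delta$. By Lemma~\ref{lem:equivDAE}, solvability of~\eqref{equ:reguEquation} is equivalent to solvability of the DAEs~\eqref{equ:solutionDAE}. Since $D \neq 0$ and $\Lambda(t,t_0)$ is invertible, equation~\eqref{equ:solutionDAE_b} can be solved explicitly:
\[
\Delta(t) = -D^{-1}\bigl(C\Psi_x(t)\Lambda(t,t_0)^{-1} + Q\bigr).
\]
Substituting this into~\eqref{equ:solutionDAE_a} gives the single linear time-varying matrix ODE
\[
\dot{\Psi}_x(t) = (A - BD^{-1}C)\Psi_x(t) + (P - BD^{-1}Q)\Lambda(t,t_0).
\]
This is well posed for every initial condition $\Psi_x(t_0)$, and its solution is finite-time bounded and absolutely continuous because $\Lambda$ is piecewise continuous and finite-time bounded (Assumption~\ref{asmp:exoProp}). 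Consequently, the DAEs are solvable if and only if, for some $\Psi_x(t_0)$, the functions $\bar{\Pi}(t) := \Psi_x(t)\Lambda(t,t_0)^{-1}$ and $\Delta(t) = -D^{-1}(C\bar{\Pi}(t) + Q)$ are bounded. Boundedness of $\Delta$ follows from boundedness of $\bar{\Pi}$, so the whole problem reduces to a bounded solution of
\[
\bar{\Pi}(t) = \Bigl(e^{A_z(t-t_0)}\bar{\Pi}(t_0)\Lambda(t_0,t_0) + \int_{t_0}^t e^{A_z(t-\tau)}G\,\Lambda(\tau,t_0)\,d\tau\Bigr)\Lambda(t,t_0)^{-1},
\]
where $A_z := A - BD^{-1}C$ and $G := P - BD^{-1}Q$. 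Piecewise continuity of $\bar{\Pi}$ and $\Delta$ is inherited from $\Lambda^{-1}$.

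The next step is to identify $A_z$ with the matrix in Definition~\ref{def:nonResCond}. For relative degree $r = 0$ we have $n - r = n$, and the transmission zeros of $(A,B,C,D)$ with $D \neq 0$ are precisely the eigenvalues of $A - BD^{-1}C$. This is the standard Rosenbrock-matrix argument: $\det\begin{bmatrix} sI - A & -B\\ C & D\end{bmatrix} = D\det(sI - A + BD^{-1}C)$. So $A_z$ is an admissible choice in~\eqref{equ:nonSmNRCOmega}.

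I then vectorize, using $\operatorname{vec}(XYZ) = (Z^\top \otimes X)\operatorname{vec}(Y)$. Since $\Lambda(t_0,t_0)$ and $\Lambda(t,t_0)^{-1}$ act on the right, this gives
\[
\operatorname{vec}\bar{\Pi}(t) = \Bigl((\Lambda(t_0,t_0)\Lambda(t,t_0)^{-1})^\top \otimes e^{A_z(t-t_0)}\Bigr)\operatorname{vec}\bar{\Pi}(t_0) + \int_{t_0}^t \Bigl((\Lambda(\tau,t_0)\Lambda(t,t_0)^{-1})^\top \otimes e^{A_z(t-\tau)}\Bigr)\operatorname{vec}(G)\,d\tau .
\]
Comparing with~\eqref{equ:nonSmNRCOmega}, this is exactly $\Omega(t)$ applied to data $\operatorname{vec}(G)$. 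More precisely, it is $\Omega(t)\,\operatorname{vec}(G)$, with $\Omega(t_0)$ playing the role of the initial condition composed with $\operatorname{vec}(G)$.

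Sufficiency follows directly. If a bounded $\Omega$ exists, choose $\operatorname{vec}\bar{\Pi}(t_0) = \Omega(t_0)\operatorname{vec}(G)$; then $\operatorname{vec}\bar{\Pi}(t) = \Omega(t)\operatorname{vec}(G)$ is bounded for every $P$ and $Q$. The initial condition $\Omega(t_0)$ may be an arbitrary square matrix, which is more freedom than is needed.

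The main obstacle is necessity, namely extracting a bounded matrix $\Omega$ from solvability for every $P$ and $Q$. As $P$ ranges over $\mathbb{R}^{n\times\nu}$ with $Q = 0$, the matrix $G = P$ ranges over all of $\mathbb{R}^{n\times\nu}$. I would take $G$ running through the canonical basis $E_k$ of $\mathbb{R}^{n\nu}$ (in vectorized form), with corresponding initial conditions $\bar{\Pi}^{(k)}(t_0)$ yielding bounded solutions. Stacking the columns $\operatorname{vec}\bar{\Pi}^{(k)}(t_0)$ into a matrix $\Omega(t_0)$, linearity of the vectorized formula in the pair $(\operatorname{vec}\bar{\Pi}(t_0), \operatorname{vec}G)$ shows that $\Omega(t)$ has the bounded functions $\operatorname{vec}\bar{\Pi}^{(k)}(t)$ as its columns, and hence is bounded.

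I expect the delicate point to be aligning dimensions with the definition as stated. There the integral term carries an implicit identity where $\operatorname{vec}(G)$ would appear, and the definition also contains the invertible factor $\Lambda(t_0,t_0)$. I will handle both by treating $\Omega$ as the matrix whose action on $\operatorname{vec}G$ gives $\operatorname{vec}\bar{\Pi}$, and by checking the column-by-column linearity carefully.
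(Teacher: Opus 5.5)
Your proposal is correct and follows the paper's proof: eliminate $\Delta$ via $D^{-1}$, reduce to the variation-of-constants formula driven by $A - BD^{-1}C$ and $P - BD^{-1}Q$, and identify boundedness of the resulting $\bar{\Pi}$ with the non-smooth non-resonance condition. You also supply steps the paper only asserts, namely the Rosenbrock identification of $A_z$, the explicit vectorization, and the column-stacking argument for necessity, and your quantifier ``for some initial condition'' matches Definition~\ref{def:nonResCond} better than the paper's ``any initial condition''.
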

\begin{pf}
By the substitutions $\Delta = -D^{-1}(C\Psi_x\Lambda^{-1} + Q)$ and $\hat{\Pi}_z = \Psi_x\Lambda^{-1}$, the quasi-regulator equations~(\ref{equ:reguEquation}) is solvable for any $P$ and $Q$ if and only if there exists a bounded piecewise continuous matrix-valued function $\hat{\Pi}_z(t) \in \mathbb{R}^{n \times \nu}$ of the form
\begin{equation*}
\begin{aligned}
    \hat{\Pi}_z(t) &=\! \Big( \!e^{A_{\pi}(t - t_0)} \hat{\Pi}_z(t_0)\Lambda(t_0, t_0) \\
    &\qquad \qquad+ \!\int_{t_0}^t e^{A_{\pi}(t-\tau)} P_{\pi} \Lambda(\tau, t_0) d \tau  \Big)  \Lambda(t, t_0)^{-1}\!, 
\end{aligned}
\end{equation*}
for all $t \geq t_0$ and any initial condition $\hat{\Pi}_z(t_0)$, where $A_{\pi} =  A - BCD^{-1}$ and $P_{\pi} = P - BQD^{-1}$. Under Assumption~\ref{asmp:exoProp}, this boundedness is satisfied for any $P$ and $Q$ if and only if systems~(\ref{equ:system}) and~(\ref{equ:explicitGen}) are non-resonant.
\hfill $\blacksquare$
\end{pf}




By Theorems~\ref{thm:SolCondWithoutD} and~\ref{thm:solCondWithD}, we have shown that the non-smooth non-resonance condition in Definition~\ref{def:nonResCond} is crucial for the quasi-regulator equations~\eqref{equ:reguEquation} to be solvable. It is important to point out that a sufficient condition for the non-resonance condition to hold is that the linear system~(\ref{equ:system}) is minimum phase.

\begin{propt}\label{prop:MiniPhaseNonRes}
    Suppose Assumption~\ref{asmp:exoProp} holds. Systems~(\ref{equ:system}) and~(\ref{equ:explicitGen}) are non-resonant if system~(\ref{equ:system}) is minimum phase.
\end{propt}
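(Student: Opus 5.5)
The plan is to verify Definition~\ref{def:nonResCond} directly with the simplest initial condition, $\Omega(t_0) = \bm{0}$. Minimum phase means every transmission zero of system~(\ref{equ:system}) lies in $\mathbb{C}_{<0}$. Hence $A_z$ is Hurwitz, and there exist constants $k \geq 1$ and $\lambda > 0$ with $\|e^{A_z s}\| \leq k e^{-\lambda s}$ for all $s \geq 0$. With $\Omega(t_0) = \bm{0}$, the first term of~\eqref{equ:nonSmNRCOmega} vanishes. It remains to bound
\begin{equation*}
\Omega(t) = \int_{t_0}^t \left(\Lambda(\tau, t_0)\Lambda(t, t_0)^{-1}\right)^{\top} \otimes e^{A_z(t-\tau)}\, d\tau .
\end{equation*}
The integrand is piecewise continuous in $\tau$ by Assumption~\ref{asmp:exoProp}, so the integral is well defined.

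The key step is a norm estimate of this integrand. Use $\|X \otimes Y\| = \|X\|\,\|Y\|$ for the induced Euclidean norm, together with invariance of the norm under transposition. Assumption~\ref{asmp:exoProp} gives a uniform constant $M$ with $\|\Lambda(\tau, t_0)\Lambda(t, t_0)^{-1}\| \leq M$ for all $t \geq \tau \geq t_0$. Combining these,
\begin{equation*}
\|\Omega(t)\| \leq \int_{t_0}^t M k e^{-\lambda(t-\tau)}\, d\tau \leq \frac{Mk}{\lambda},
\end{equation*}
uniformly in $t \geq t_0$. This shows that $\Omega$ is bounded, so the systems are non-resonant.

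For completeness, I would also note that any other $\Omega(t_0)$ works. The first term of~\eqref{equ:nonSmNRCOmega} is bounded by $\|\Lambda(t_0,t_0)\Lambda(t,t_0)^{-1}\|\, k e^{-\lambda(t-t_0)}\|\Omega(t_0)\| \leq Mk\|\Omega(t_0)\|$, using the same uniform bound with $\tau = t_0$. This is stronger than the existence statement the definition requires.

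The only delicate point is that the boundedness assumption in Assumption~\ref{asmp:exoProp} is not just finite-time boundedness of $\Lambda$. It is a uniform bound on the ``backward transition'' $\Lambda(\tau,t_0)\Lambda(t,t_0)^{-1}$, and this uniformity is exactly what lets exponential decay of $e^{A_z(t-\tau)}$ dominate. If $\Lambda^{-1}$ could grow, for instance if $\omega$ were a decaying signal, the argument would fail. I would therefore state explicitly that this uniform bound is taken over all pairs $t \geq \tau \geq t_0$. I would also remark that $A_z$ is defined only up to similarity, which affects only the constant $k$.
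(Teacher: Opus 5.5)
Your proposal is correct and follows essentially the same route as the paper: $A_z$ Hurwitz gives an exponential bound on $e^{A_z(t-\tau)}$. The uniform bound on $\Lambda(\tau,t_0)\Lambda(t,t_0)^{-1}$ from Assumption~\ref{asmp:exoProp} then controls the Kronecker factor, so the integral is bounded by a constant of the form $Mk/\lambda$. The only difference is cosmetic: you pick $\Omega(t_0)=\bm{0}$ to eliminate the first term, whereas the paper bounds that term for an arbitrary $\Omega(t_0)$, a case your closing remark also covers.
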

\begin{pf}
     When system~(\ref{equ:system}) is minimum phase and  Assumption~\ref{asmp:exoProp} holds, the first term on the right-hand side of~\eqref{equ:nonSmNRCOmega} is bounded and converges to zero when $t \to +\infty$ because $\Lambda(t_0, t_0)\Lambda(t, t_0)^{-1}$ is bounded for all times and $A_{z}$ is Hurwitz. Then the proof concludes if the integral term of~\eqref{equ:nonSmNRCOmega} is bounded for all times.     
     In fact, by Assumption~\ref{asmp:exoProp}, there exists a constant $h \in \mathbb{R}_{> 0}$ such that $\|  \Lambda(\tau, t_0)\Lambda(t, t_0)^{-1} \| \leq h$ for all $t \geq \tau \geq t_0$. Moreover, since the matrix $A_z$ is Hurwitz, there exist constants $\alpha, \beta \in \mathbb{R}_{> 0}$ such that $\|e^{A_z(t -\tau)}\| \leq \alpha e^{-\beta(t - \tau)}$. Consequently, the integral term of~\eqref{equ:nonSmNRCOmega} satisfies
     \begin{equation*}
    \label{eq-limPI_g}
    \begin{aligned}
       &\left\| \int_{t_0}^t \left(
    \Lambda(\tau, t_0)\Lambda(t, t_0)^{-1}\right)^{\top} \otimes e^{A_{z}(t - \tau)} d \tau \right\| \\
       &\qquad \leq \! \int_{t_0}^t \!\! \left\| \left(\! \Lambda(\tau, t_0)\Lambda(t, t_0)^{-1} \!\right)\!\!^{\top} \right\| \!\left\| e^{A_{z}(t - \tau)} \!\right\| \!d \tau \\
       &\qquad \leq \int_{t_0}^{t} h \cdot \alpha e^{-\beta(t-\tau)} d \tau 
       \leq \frac{\alpha h}{\beta} \left(\! 1 \!-\! e^{-\beta(t-t_0)} \!\right) \leq \frac{\alpha h}{\beta}.
    \end{aligned}
    \end{equation*}
    This concludes the proof.
    \hfill $\blacksquare$
\end{pf}

\begin{remark}
The non-smooth non-resonance condition is an extension of the non-resonance condition in the classical linear output regulation problem~\cite[Assumption 1.4]{ref:huang2004nonlinear} to the non-smooth case. To be more specific, if $\Lambda(t, t_0) = e^{S(t-t_0)}$ with $S \in \mathbb{R}^{\nu \times \nu}$, meaning that the signal generator~\eqref{equ:explicitGen} is equivalent to
\begin{equation*}
\dot{\omega}(t)=S \omega(t), \qquad \omega(t_0) = \omega_0,
\end{equation*}
the expression~(\ref{equ:solvabCheckNFIntDymPi}) becomes
\begin{equation}\label{equ:solvabCheckNFIntDymPiSylvester}
    \bm{0}_{(n-r)\times \nu} = A_{11}\bar{\Pi}_z - \bar{\Pi}_{z}S + G_1,
\end{equation}
and~(\ref{equ:nonSmNRCOmega}) becomes
\begin{equation*}
    \Omega(t)= e^{A_S(t - t_0)} \Omega(t_0) + \int_{t_0}^t e^{A_S(t - \tau)} d \tau.
\end{equation*}
where $A_S = I_\nu \otimes A_z - S^\top \otimes I_{n-r}$. Then the boundedness of $\Omega$, as required by the non-smooth non-resonance condition, implies (but is not implied by) the non-singularity of the matrix $A_S$. This is equivalent to $\sigma(A_{z}) \cap \sigma(S) = \varnothing$, which guarantees the solvability of~\eqref{equ:solvabCheckNFIntDymPiSylvester} as the eigenvalues of $A_{z}$ coincide with the transmission zeros of system~(\ref{equ:system}), \textit{i.e.}, $\sigma(A_{z}) = \sigma(A_{11})$. Thus, we can see that as a result of extending the Sylvester equation~\eqref{equ:solvabCheckNFIntDymPiSylvester} to the non-smooth non-periodic case, the derived non-resonance condition in Definition~\ref{def:nonResCond} becomes stronger than the classical non-resonance condition, see~\cite{ref:bhatia1997and} for more discussion of the integral-from solution of the Sylvester equations.
\end{remark}


\begin{remark}
    While the established solvability results guarantee the existence of solutions to the quasi-regulator equations~\eqref{equ:reguEquation}, these solutions, $\Pi_x$ and $\Delta$, are not necessarily unique in the non-smooth, non-periodic case. To see this, consider the scenario where system~\eqref{equ:system} is minimum phase, which, by Proposition~\ref{prop:MiniPhaseNonRes}, guarantees the satisfaction of the non-smooth non-resonance condition. As a result, $A_{11}$ in~\eqref{equ:solvabCheckNFIntDymPi} is Hurwitz, and~\cite[Lemma~1]{ref:Niu2025arXiv} implies that the solution $\bar{\Pi}_z$ starting with any initial condition $\bar{\Pi}_z(t_0)  \in \mathbb{R}^{(n-r) \times \nu}$ is bounded, indicating the nonuniqueness of the bounded piecewise continuous solutions $\bar{\Pi}_z$ and $\Delta$ solving~\eqref{equ:solvabCheckNFPi}. By the equivalence between~\eqref{equ:reguEquation} and~\eqref{equ:solvabCheckNFPi} discussed in the proof of Theorem~\ref{thm:SolCondWithoutD}, this non-uniqueness directly extends to the quasi-regulator equations~\eqref{equ:reguEquation}. A similar conclusion also holds for the case $D \neq 0$ addressed in Theorem~\ref{thm:solCondWithD}.
\end{remark}


\section{Conclusion}\label{sec:concl}

In this paper, we have addressed the solvability of the quasi-regulator equations arising from the NNOR problem. To this end, we have first equivalently reformulated the quasi-regulator equations as a group of DAEs. By separately studying the DAEs with and without the feedforward term, we have elucidated the significance of the relative degree in determining solvability, which is also related to the smoothness properties of the exogenous signal. Additionally, under specific relative degree requirements, we have proposed a non-smooth, non-resonance condition that provides a necessary and sufficient characterization of the solvability of the quasi-regulator equations.



\bibliography{ifacconf}             
                                                   







\end{document}